\newtheorem*{remark}{Remark}
\newtheorem{theorem}{Theorem}
\theoremstyle{definition}
\newtheorem{definition}{Definition}[section]
\newtheorem{lemma}[theorem]{Lemma}
\newtheorem{corollary}[theorem]{Corollary}
\def\BibTeX{{\rm B\kern-.05em{\sc i\kern-.025em b}\kern-.08em
    T\kern-.1667em\lower.7ex\hbox{E}\kern-.125emX}}
\begin{document}

\title{An Analysis of Multi-hop Iterative Approximate Byzantine Consensus with Local Communication}

\author{
\IEEEauthorblockN{Matthew Ding}
\IEEEauthorblockA{
\textit{Westford Academy, MIT PRIMES}\\
Westford, USA \\
matthewding@berkeley.edu}
}

\maketitle
\begin{abstract}
Iterative Approximate Byzantine Consensus (IABC) is a fundamental problem of fault-tolerant distributed computing where machines seek to achieve approximate consensus to arbitrary exactness in the presence of Byzantine failures. We present a novel algorithm for this problem, named Relay-IABC, which relies on the usage of a multi-hop relayed messaging system and crytographically secure message signatures. The use of signatures and relays allows the strict necessary network conditions of traditional IABC algorithms to be circumvented. In addition, we show evidence that Relay-IABC achieves faster convergence than traditional algorithms even under these strict network conditions with both theoretical analysis and experimental results.
\end{abstract}
\begin{IEEEkeywords}
IABC, consensus, networks, byzantine, relay
\end{IEEEkeywords}

\section{Introduction}
The idea of the Byzantine fault-tolerance problem was first introduced in the seminal paper of Lamport et al. \cite{ByzantineGeneral}. Byzantine consensus has since become a large research topic, with applications in a variety of fields including blockchain technology and machine learning \cite{Blockchain} \cite{ByzantineML}. 

Dolev et al. \cite{Dolev} modified and extended the problem of Byzantine agreement by introducing \textit{approximate Byzantine agreement}, allowing machines to reach approximate consensus rather than exact consensus. This was motivated by the fact that exact consensus in asynchronous systems was proven to be impossible \cite{Lynch}. Additionally, in synchronous systems, approximate Byzantine consensus can be used to create algorithms that do not require complete knowledge of the network topology \cite{IABCPart1}.

This approximate Byzantine consensus problem aims to have all honest machines converge to a single state within the convex hull of initial states as the number of iterations approaches infinity \cite{Dolev}. Vaidya \cite{Transition} utilizes a method describing the progression of states in the network using transition matrices to prove consensus.

The main contribution of this paper is to utilize two key tools that have seen a lot of use and success in traditional Byzantine consensus problems: crytographically secure signatures \cite{DolevStrong, ByzantineGeneral,Dishonest} and message relays \cite{ByzantineGeneral,Relay}; our work is a generalization of the work presented in \cite{Transition}, with signatures and relays used to circumvent certain network assumptions that would otherwise be necessary.

Much work has been done on the application of multi-hop communications for standard Byzantine Broadcast \cite{Maurer, Drabkin}. Additionally, the work done in \cite{Su} extended multi-hop communication to the setting of IABC problems. However, our work is significant in two ways: First, it differs from \cite{Su} in that it considers the case of only local communication, where messages may only be sent directly to direct neighbors as opposed to more distant ones. Second, our work also analyzes the convergence rate of IABC algorithms, which has not been done previously in the multi-hop communication setting. 

Byzantine consensus has had applications to machine learning by having machines perform gradient descent steps to minimize a loss function on top of consensus techniques. The combination of delays and signatures may have additional applications in Byzantine gradient descent. \cite{BRIDGE} extends approximate Byzantine consensus algorithms for use in Byzantine gradient descent methods. Our work analyzes the convergence rate of IABC algorithms, which is of interest in any application to machine learning protocols.

\section{Problem Formulation}
\subsection{Definitions}
\subsubsection{Network Definitions}
\begin{itemize}
    \item Let $m$ be the total number of machines (or "nodes"), $h$ the number of honest machines, and $b$ be the number of Byzantine machines ($h+b=m$). 
    
    \item Let $B$ denote the set of byzantine nodes and $H$ denote the set of honest nodes. The set of all nodes $V$ is equal to $B \cup H$.
    
    \item Denote $N_{i}^{I}$ to be the set of all machines that have incoming edges from machine $i$. Denote $N_{i}^{O}$ to be the set of all machines that have outgoing edges to machine $i$
    
    \item Define $dist(i,j)$, for some $i,j \in V$ as the length of the shortest path from node $i$ to $j$
\end{itemize}

\subsubsection{Matrix Definitions}
    \begin{itemize}
        \item Throughout this paper, we will refer to a value that can be lower bounded by some arbitrary positive constant as a "non-zero value".
        
        \item A non-zero column denotes a column of a matrix that is filled entirely with non-zero values.
    
        \item Transition matrix $M$ denotes a square matrix of size $h\times h$
        
        \item $M_{i}[t]$ denotes the $i$th row of matrix $M[t]$
        
        \item $M_{ij}[t]$ denotes the element at row $i$ and column $j$ of matrix $M[t]$
        
        \item We define the first row and column in every matrix as row/column 0
    \end{itemize}

\subsection{Decentralized Communication Model}
We consider a static, directed network $G(V, E)$, where $V = \{0,1,2,..., m-1\}$ and $E$ representing communication links between neighboring nodes. If $(i, j) \in E$, then node $i$ may send messages to node $j$. 

Our protocol is analyzed in the synchronous communication setting, where communication occurs over a sequence of iterations. Messages sent during an iteration are guaranteed to be received by the intended recipient before the beginning of the subsequent iteration, and given finite amount of time. 

Each node $i\in V$ starts with an initial real-valued input. The goal of the protocol is to approach a state that satisfies the following two conditions:
\begin{enumerate}
    \item Validity condition: At the beginning of each iteration of the protocol, the state of each honest node remains within the convex hull of the states of all honest nodes at the beginning of the previous iteration.
    
    \item Convergence condition: The difference between the states of any two honest nodes approaches zero as the number of iterations approaches infinity.
\end{enumerate}

To make analysis easier, we will differentiate between \textit{iterations} \textit{phases}. We define an \textit{iteration} as a given finite duration of time where machines may communicate with one another. During each iteration, every honest node sends and receives messages from its neighbors. Define a \textit{phase} as a set of $D$ iterations. Therefore the $ith$ phase contains iterations $iD$ to $(i+1)D$. Since the distance between any two honest nodes is at most $D$, any message from an honest node is guaranteed to reach all other honest nodes within $D$ iterations.

Note that even though nodes do not have edges connecting to themselves, they may always "send" messages to themselves.

\subsection{Byzantine Failure Model}
Among the $m$ machines in the decentralized network, $b$ of them are Byzantine machines. Byzantine machines may deviate arbitrarily from the standard protocol. For example, a Byzantine machine may output any arbitrary real value, and send mismatching messages to each of its neighbors. However, key restrictions of Byzantine nodes are that they cannot forge signatures of honest users, and they cannot manipulate the contents of existing signed messages.

\section{Relay-IABC Algorithm}
\subsection{Assumptions}
\theoremstyle{definition}
\begin{definition}{Honest Subgraph:}
Define the honest subgraph as the graph that is formed by removing all byzantine nodes and all edges connected to byzantine nodes in the original graph.
\end{definition}
\begin{itemize}
    \item The number of byzantine machines is strictly less than one-third the total number of machines ($b<\frac{1}{3}m$).
    
    \item We assume the honest subgraph is bidirectionally connected (there exists directed paths from every honest node to every other honest node).
    
    \item We assume the diameter of the honest subgraph is upper-bounded by $D$. 
\end{itemize}

Note that individuals machines do not need to have knowledge of the exact network topology besides the value of $D$.

\subsection{Our Contributions}
Our work is an extension of the work done in \cite{Transition}. Our network assumptions are much less restrictive. In particular, we assume no network connectivity assumptions besides the honest subgraph being bidirectionally connected.

The goal of this protocol is to use a relay system to bypass traditional network connectivity assumptions outlined in \cite{IABCPart2}, which has a necessary but insufficient condition that each node has an indegree of at least $2b+1$. This requires that each honest node have at least $b+1$ incoming edges from honest neighbors. On the other hand, bidirectional connectivity of the honest subgraph may possibly be achieved when each honest node has a maximum indegree of as low as one honest neighbor. 

One additional advantage of Relay-IABC is that even for graphs with the strict network assumptions of \cite{Transition}, we show evidence that our algorithm achieves faster convergence than traditional non-relay algorithms both theoretically and empirically.

This comes at the tradeoff of higher communication costs, as now machines send to each other at most $m$ sets of parameters in each message, as opposed to one parameter in most other algorithms in literature. 

Our paper proposes an Iterative Approximate Byzantine Consensus algorithm with unforgeable signatures, which to our knowledge has never been done before. Signatures have seen lots of successful usage in standard byzantine consensus, but until now have not been used in Iterative Approximate Byzantine Consensus methods.

\subsection{High-Level Idea}
Since the honest subgraph is bidirectionally connected, this allows all honest machines to receive signed messages from every other honest machine through a broadcast and relay system. Thus we create a pseudo-complete communication graph over the course of an entire phase of $D$ rounds. Since a complete graph does indeed satisfy the necessary conditions of \cite{IABCPart2}, our relay protocol achieves convergence as well.

We use a trimmed-mean aggregation step \cite{BRIDGE,Transition} to ensure Byzantine robustness. The trimmed-mean step works by eliminating the greatest $b$ values and the smallest $b$ values, and then taking the arithmetic mean of the remaining values. By removing the greatest and least $b$ values, we ensure that the maximum and minimum values of the set of remaining values are both values of honest machines. This prevents Byzantine machines from forcing the states of honest machines to deviate arbitrarily.
\medskip

Each honest machine $i$ keeps track of their own vector $v_{i}$. This vector consists of $(v_{i}(0), v_{i}(1)... v_{i}(m-1))$, where $v_{i}(j)$ represents machine $i$'s most recently updated record of machine $j$'s state. $v_{i}$ may not always contain a state that was received from an actual message for each machine in the network, as machine $i$ may not have yet have received a message from all machines within the current iteration. 

Machine $i$ only performs a trimmed-mean step after each $D$ iterations, as this ensures that each broadcast of all honest machines has had sufficient time to relay across the entire graph and reach every other honest node. This means that each honest machine is outputting an identical message, their current state value, for $D$ consecutive iterations.

We choose the specific value $D$, the upper-bound of the diameter of the honest subgraph, so after $D$ iterations, all honest vectors will always contain more honest parameters than byzantine parameters. $D$ is in the worst-case $O(h)$, but with high probability is $O(\log h)$ in Erdos-Renyi random graphs \cite{RandomGraph}.

\medskip

The Relay-IABC algorithm guarantees that for all honest machines $i$ and $j$, any state $v_i(j)$ in the vector $v_i$ will contain a valid signature from machine $j$.

\subsection{Relay-IABC Algorithm} 

\begin{algorithm} \label{Alg1}
\SetAlgoLined
    \begin{remark}
        This algorithm is implemented by a specific honest machine $i$. Each honest machine $i \in H$ will implement this algorithm concurrently.
    \end{remark}
    
    \KwResult{Each state $v_{i}(i)$ remains within the convex hull of the initial states at each Iteration $t$, and each state converges to the same value as Iteration $t \rightarrow \infty$.}

    Initialization:
 
    $v_{i}(i)\gets$ Initial State of node $i$ (with signature $i$).
    
    \medskip
 
    \For{Iteration $t\gets0$ \KwTo $T$}{
        Broadcast $v_i$ to all machines $j\in N_i^O$
        
        Receive $v_j$ from all machines $j\in N_i^I$
        
        \begin{remark}
            When receiving $v_{j}$, ignore all parameters received that are not properly signed. If no proper message is received from a certain node, set their incoming value to be an arbitrary predefined real value (e.g. 0).
        \end{remark}

        $G_{i}\gets N_i^O \cup \{i\}$
        
        \medskip
        
        \For{$j\gets0$ \KwTo $m-1$}{
            \begin{remark}
                In the next two lines, we do the following: Out of all parameters $v(j)$ received from the broadcast step, set $v_{i}(j)$ to a single arbitrary one $v'(j)$
            \end{remark}
            
            \If{$j \neq i$}{
                $v_i(j)\gets v'(j)$
            }
        }
        
        \medskip
       
        \If{$t \bmod D = 0$}{
            \textbf{Trimmed-mean update step:} 
            
            \medskip
            In a new vector, sort the values of $v_{i}$ in increasing order:
            
            \begin{equation}
                v^*_{i} \gets sort(v_{i})
            \end{equation}
            
            Ignore the least and greatest $b$ values, and set the value of $v_{i}(i)$ to be the average of all remaining values in $v^*_{i}$, as defined below:
            
            \begin{equation} \label{update}
                v_{i}(i) \gets \frac{1}{m-2b} \sum_{k=b}^{m-b-1} v^*_{i}(k)
            \end{equation}
       
            Add signature $i$ to $v_{i}(i)$
        }
    }
 \caption{Relay-IABC}
\end{algorithm}

In the following two sections, we prove the correctness of Relay-IABC by showing that it satisfies both the validity and convergence conditions.

\section{Validity of Relay-IABC}
Let us consider an arbitrary iteration $t$ of the Relay-IABC protocol such that $t\geq 1$. We prove the following theorem.

\begin{theorem}
    For each honest node $h$, the state of $h$ at the end of iteration $t$ remains within the convex hull of states of honest nodes from the end of iteration $t-1$.  
\end{theorem}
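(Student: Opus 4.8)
The plan is to split the argument according to whether iteration $t$ is an update iteration, that is, whether $t \equiv 0 \pmod D$. First I would record the elementary observation that an honest node $i$ modifies its own state $v_i(i)$ only during the trimmed-mean step, which runs only when $t \bmod D = 0$; consequently each honest state is constant throughout a phase and changes only at multiples of $D$. When $t \not\equiv 0 \pmod D$ the state $v_i(i)$ at the end of iteration $t$ equals its value at the end of iteration $t-1$, so it lies trivially inside the convex hull, being itself one of the defining honest points. The substantive case is therefore $t \equiv 0 \pmod D$.

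For the update case, the first step is to pin down exactly which values populate the vector $v_i$. Write $s_j$ for the state of honest node $j$ at the end of iteration $t-1$. I would argue, using the signature guarantee together with the diameter bound, that for every honest $j$ the entry $v_i(j)$ equals $s_j$: since $j$'s state is constant over the preceding phase and $D$ upper-bounds the honest diameter, $j$'s broadcast value saturates the honest subgraph within $D$ iterations, while no newer value of $j$ can yet exist because the simultaneous trimmed-mean update occurs only after the receive step. The signatures rule out forged or altered honest entries, so among the $m$ entries of $v_i$ at least $h = m-b$ are genuine honest states equal to some $s_j$, and at most $b$ are adversarial.

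The core of the proof is then the standard trimmed-mean bound, which is well defined since $b < \tfrac{1}{3}m$ forces $m-2b > b$ so that retained entries exist and the index $m-b-1$ is itself retained. Writing the sorted vector $v^*_i$, I would show that the largest retained value $v^*_i(m-b-1)$ is at most $\max_{j\in H} s_j$ and the smallest retained value $v^*_i(b)$ is at least $\min_{j\in H} s_j$: among the top $b+1$ entries, indices $m-b-1$ through $m-1$, at most $b$ can be Byzantine, so at least one is a genuine honest state bounding $v^*_i(m-b-1)$ from above, and symmetrically for the bottom $b+1$ entries. Hence every retained value lies in $[\min_{j\in H} s_j,\ \max_{j\in H} s_j]$, and the new state, being an equally weighted average of retained values, lies in the same interval, which is precisely the convex hull of the honest states at the end of iteration $t-1$.

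The step I expect to be the main obstacle is the propagation claim of the second paragraph: establishing that at the update iteration the honest entries of $v_i$ are exactly the end-of-iteration-$(t-1)$ states rather than stale copies from an earlier phase. This is the genuinely new ingredient relative to classical trimmed-mean validity, and it is what licenses the comparison against the convex hull at $t-1$; care is needed precisely because that convex hull shrinks across phases, so an outdated honest value could in principle fall outside it. Once the contents of $v_i$ are fixed, the remaining combinatorial trimming argument is routine.
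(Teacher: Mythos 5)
Your proof is correct and, if anything, more careful than the paper's own. The two arguments share the trimmed-mean core but execute it differently: the paper splits into cases according to whether the retained middle block $M$ contains a Byzantine value, and in the nontrivial case argues by pigeonhole that both trimmed blocks $L$ and $G$ must contain an honest value, sandwiching every Byzantine survivor inside the honest hull; you instead bound the extreme retained entries directly, noting that among the top (resp.\ bottom) $b+1$ sorted entries at least one is honest, so every retained value lies in $[\min_{j\in H} s_j,\ \max_{j\in H} s_j]$. These are two phrasings of the same pigeonhole fact, but yours needs no case split and works for any $h \ge 2b+1$, whereas the paper literally counts ``$2b+1$ honest states and $b$ byzantine states,'' i.e.\ it silently assumes the tight case $m = 3b+1$. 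The more substantive difference is your handling of the phase structure, which the paper's proof skips entirely: it argues as though every iteration performed a complete exchange followed by a trimmed-mean update, which in Relay-IABC is true only when $t \equiv 0 \pmod{D}$. Your split into idle iterations (state unchanged, hence trivially inside the hull) and update iterations, together with the explicit propagation claim that at an update iteration the honest entries of $v_i$ equal the end-of-iteration-$(t-1)$ states, is exactly what is needed for the theorem as stated for arbitrary $t \ge 1$. One caveat on that propagation step, which affects the paper at least as much as it affects you: your justification rules out \emph{newer} values of $j$ but not \emph{staler} ones. Since the algorithm lets a node set $v_i(j)$ to ``a single arbitrary one'' of the received, properly signed values for $j$, a correctly signed value of $j$ from the \emph{previous} phase can in principle survive the relay and be the one selected at the update, and such a value need not lie in the hull at $t-1$. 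Closing this would require tagging signed states with a phase number (so stale values are discarded) or having nodes prefer the freshest signed value; the paper assumes this freshness silently, so your proof is no weaker than the one it replaces---you have simply exposed the assumption the paper hides.
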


\begin{proof}
    In iteration $t$, each honest node $h$ receives every single honest nodes state from iteration $t-1$ (including its own), along with an arbitrary state from each byzantine node in the network. Thus, each honest node receives a total of $2b+1$ honest states and $b$ byzantine states for the trimmed-mean update step in Algorithm \ref{Alg1}. In this step, each honest node will sort their list of states and remove the greatest and least $b$ values each. 
    
    For some arbitrary $h$, without loss of generality, define the set of the lowest $b$ values as $L$, the greatest $b$ values as $G$, and the remaining $b+1$ values of $M$. The state of node $h$ at the end of iteration $t$ is the arithmetic mean of all states within $M$. We now consider two possible cases of where byzantine nodes lie within these three sets.
    
    \begin{enumerate}
        \item Case 1: $\lvert B \cap M \rvert = 0$
        
        If all states within $M$ are from honest nodes, this guarantees that the arithmetic mean of all values within set $M$ will be within the convex hull of all honest states from iteration $t-1$, proving the theorem.
        
        \item Case 2: $\lvert B \cap M \rvert \neq 0$
        
        In this case, there exists at least one Byzantine node's state within the set $M$. This means that $\lvert H \cap M \rvert \leq b$ and $\lvert (L \cup G) \cap H \rvert \geq b+1$. Since the size of both $L$ and $G$ is $b$, there exists at least one honest node's state in both $L$ and $G$. This implies that every single Byzantine state within $M$ lies within the convex hull of honest states from iteration $t-1$. The arithmetic mean of all states within $M$ is thus also within this convex hull, proving the theorem.
    \end{enumerate}
\end{proof}

\section{Convergence of Relay-IABC}
\subsection{Overview} \label{overview}
In this section, we prove that the Relay-IABC algorithm satisfies the convergence condition for IABC. Our algorithm and proof is similar to that which is proposed in \cite{Transition}. We will refer to said algorithm as "IABC", and the novel algorithm proposed in this paper as "Relay-IABC". We utilize the same transition matrices $M[t]$ of size $h\times h$ in \cite{Transition} to model the network iterations.

The main difference between the two algorithms is that each single transition matrix in Relay-IABC corresponds a single phase, or a set of a $D$ consecutive transition matrices in IABC. Since each node will receive a value from every other node in the graph within $D$ iterations, communication in Relay-IABC per $D$ iterations can be represented as a single iteration within a complete graph. In the next subsection, we prove that a complete graph satisfies the network connectivity assumptions of IABC. Thus, Relay-IABC satisfies the condition of convergence for IABC algorithms \cite{Transition}.

\subsection{Source Component Proof} \label{SourceProof}
We define a source component as an honest node that has a directed path of finite length to all other honest nodes in the network. A necessary condition outlined in \cite{Transition} for a specific network to be able to converge with an IABC algorithm is that any arbitrary "reduced graph" (see Definition \ref{reduced}) of the network must contain a source component.

\theoremstyle{definition}
\begin{definition}{Complete Graph:}
A complete graph is a graph with vertex set $V$, and edge set $E'$, such that $\forall i,j$ such that $i \neq j: (i,j) \in E'$. The network itself contains $b$ Byzantine nodes, $h$ honest nodes, and $\|V\| = m$.
\end{definition}

A complete graph describes the de-facto communication during an entire phase: since the longest path between any two honest nodes is at most $D$, any two nodes may communicate with each other for at least one iteration during every single phase. We now introduce a graph that represents the network graph after the trimming in (\ref{update}).

\theoremstyle{definition} 
\begin{definition}{Reduced Graph:} \label{reduced}
A general reduced graph is a specific graph with all nodes in set $B$ removed, along with their incoming and outgoing edges. Additional, we remove any arbitrary set of $b$ incoming edges from each remaining node. 
\end{definition}

From now on, we choose to use the term "reduced graph" to specifically refer to the reduced graph of the complete graph in this paper. The reduced graph represents the "de-facto communication links" between honest nodes after the trimmed-mean step is completed.

Note that there may be multiple reduced graphs for every complete graph, but only a finite number of them. Define $R_{f}$ to be the set of all reduced graphs for a given complete graph, and define $\tau$ as $\| R_{f} \|$. Note that this definition comes from \cite{Transition}.
\medskip

We define a source component of a graph as an honest node that has a directed path to every other honest node in a graph. The following lemma is necessary to prove convergence of Relay-IABC \cite{Transition}:

 \begin{lemma} 
    Any arbitrary reduced graph contains at least one source component.
 \end{lemma}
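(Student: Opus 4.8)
The plan is to argue via the standard decomposition of a directed graph into its strongly connected components (SCCs) and the resulting condensation. First I would record the key degree property of the reduced graph. Starting from the complete graph on $m$ nodes, every honest node has indegree $m-1$; removing the $b$ Byzantine nodes together with their incident edges drops each honest indegree to $m-1-b = h-1$; and deleting a further $b$ incoming edges, as the reduced-graph construction permits, leaves every honest node with indegree exactly $h-1-b$ among the remaining honest nodes.

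Next I would contract each SCC to a single vertex to form the condensation, which is necessarily a directed acyclic graph (DAG). Every finite DAG has at least one \emph{source} vertex, i.e. an SCC receiving no edges from outside itself, so the problem reduces to proving that this source SCC is \emph{unique}. The payoff of uniqueness is immediate: in a DAG a unique source has a directed path to every vertex (tracing incoming edges backward from any vertex terminates at a source, which must be the unique one), and all nodes inside a single SCC reach one another; hence any node $v$ in a unique source SCC reaches every other honest node and is therefore a source component in the sense of the definition.

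The heart of the argument, and the step I expect to be the main obstacle, is a lower bound on the size of any source SCC. Let $C$ be a source SCC with $|C| = k$. Because $C$ receives no edges from outside, all $h-1-b$ incoming edges of each node in $C$ must originate within $C$, and since a node has at most $k-1$ potential in-neighbors inside $C$, we get $k-1 \ge h-1-b$, hence $k \ge h-b$. The assumption $b < \tfrac{1}{3}m$ yields $h = m-b > 2b$, so $h-b > \tfrac{h}{2}$: every source SCC contains a strict majority of the honest nodes. Two distinct source SCCs are disjoint (an edge between them would give one of them an external incoming edge, destroying its source property), so two of them would account for more than $h$ honest nodes in total, a contradiction. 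Thus exactly one source SCC exists.

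Finally I would close the argument by selecting any node $v$ in this unique source SCC and checking the definition directly: $v$ reaches every node of its own SCC by strong connectivity, and the condensation argument guarantees that the unique source SCC has a path to every other SCC, which lifts to an actual directed path in the reduced graph from $v$ to a representative of each remaining SCC and then, by strong connectivity within that SCC, to every node it contains. Hence $v$ has a directed path to all other honest nodes, so the arbitrary reduced graph contains a source component, as claimed.
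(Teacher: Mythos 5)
Your proof is correct, but it takes a genuinely different route from the paper's. The paper argues by edge counting and the Pigeonhole Principle: since removing $n$ incoming edges per node from the complete graph on $2n+1$ nodes leaves at least $(2n+1)n$ edges, some node $v_0$ retains at least $n$ outgoing edges, and a short contradiction argument then shows every remaining node is reachable from $v_0$ within two hops. You instead pass to the condensation into strongly connected components: using the fact that every honest node keeps indegree exactly $h-1-b$ in the reduced graph, you show any source SCC must have size at least $h-b > h/2$, so the source SCC is unique and (being the unique source of a DAG) reaches every other component. Your argument has two advantages: it handles the general case $h > 2b$ directly, whereas the paper proves only the case $h = 2n+1$ and waves at "a simple generalization," and it delivers strictly more structure (a unique, majority-sized source SCC rather than just one source node). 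The paper's argument has a compensating advantage that matters downstream: its counting step is exactly what yields Corollary~\ref{SourceCorollary} (some node keeps at least $n$ outgoing edges after trimming), which the paper reuses in Lemma~\ref{L1} for the convergence-rate analysis, and it also shows the source reaches everything in at most two hops, a quantitative fact your condensation argument does not provide. If you wanted your proof to substitute for the paper's in situ, you would need to re-derive that corollary separately.
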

 
 \begin{proof} 
     A reduced graph is constructed by removing $n$ incoming edges from each node of a fully connected directed graph of at least $2n+1$ nodes. In this proof, we assume that the reduced graph has exactly $2n+1$ nodes. The case where the number of nodes exceeds $2n+1$ is a simple generalization.

    In a fully connected graph of $2n+1$ nodes, there are totally (2n+1)2n outgoing edges. Thus, after removing $n$ incoming edges (which are also outgoing edges of some other arbitrary node) from each node, there still exists at least $(2n+1)n$ outgoing edges left in the graph. By Pigeonhole Principle, at least one node in the reduced graph, let us denote it as $v_0$, has at least $n$ outgoing edges. 
    
    Denote the set of all nodes with direct incoming edges from $v_0$ as set $S$. We know that $\|S\|\geq n$. For each of the nodes which are not in the set $S$ (there are no more than n nodes not in S), it is noted that each of them has $n$ incoming edges. 
    
    Assume that none of these nodes have incoming edges from $v_0$ or any node in set $S$. Thus, they can only have edges from at most a total of $2n+1 - 2 - n = n-1$ nodes. However, it is known that all nodes have $n$ incoming edges. This is a contradiction. Thus, they must either have one incoming edge from a node in $S$, or an incoming edge from $v_0$. Therefore we have proven that $v_0$ is a source component.
\end{proof}

The above proof also proves the following corollary.
\begin{corollary} \label{SourceCorollary}
    At least one node in a reduced graph of size $2n+1$ contains at least $n$ outgoing edges.
\end{corollary}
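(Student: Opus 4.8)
The plan is to extract the corollary directly from the edge-counting step that opens the proof of the preceding lemma, since the claim is precisely the Pigeonhole conclusion established there. The corollary asks only for the existence of a single high-degree node, which is strictly weaker than the full source-component statement, so no new machinery beyond the initial counting argument should be required.

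First I would fix a reduced graph on $2n+1$ nodes obtained from the complete directed graph by deleting $n$ incoming edges at each node. The opening move is to tally outgoing edges: in the complete graph every node has exactly $2n$ outgoing edges, giving a total of $(2n+1)(2n)$ directed edges. The key accounting observation is that each incoming edge deleted at a node is simultaneously an outgoing edge of some other node, so the $n$ deletions per node remove $(2n+1)n$ outgoing edges in total (counted without duplication, since each edge has a unique tail). Subtracting, the reduced graph retains at least $(2n+1)(2n) - (2n+1)n = (2n+1)n$ outgoing edges.

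Next I would apply the Pigeonhole Principle: distributing at least $(2n+1)n$ outgoing edges among $2n+1$ nodes forces at least one node, say $v_0$, to carry at least $n$ of them, which is exactly the assertion of the corollary. I would emphasize that this node $v_0$ is the same vertex identified in the lemma's proof, so the corollary is a genuine byproduct rather than an independent result.

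The step I expect to require the most care is the double-counting in the edge accounting, specifically the claim that the $(2n+1)n$ removed incoming edges are all distinct outgoing edges and hence subtract cleanly from the outgoing total. Because each directed edge $(u,w)$ is the unique outgoing edge of $u$ and the unique incoming edge of $w$, deleting $n$ incoming edges per vertex cannot remove the same outgoing edge twice, so the subtraction is exact and the bound is tight. Everything else is immediate arithmetic and a single invocation of Pigeonhole.
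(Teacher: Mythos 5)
Your proposal is correct and matches the paper's approach exactly: the paper derives this corollary as a byproduct of the opening edge-count in the source-component lemma's proof, which is precisely the counting-plus-Pigeonhole argument you give (complete graph has $(2n+1)(2n)$ outgoing edges, the per-node deletions remove $(2n+1)n$ distinct edges since each edge is an incoming edge of exactly one node, so at least $(2n+1)n$ remain and some node keeps at least $n$). Your added care about the deletions being disjoint is a worthwhile clarification of a step the paper states without justification, but it is the same proof.
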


This corollary will be used later in the paper to derive the convergence rate of the Relay-IABC algorithm.

\subsection{Transition Matrix Analysis} \label{Theoretical}
As explained in Section \ref{overview}, we use transition matrices to model the iteration of network states. However, each matrix $M'[t]$ models communication over a set of $D$ consecutive iterations (one phase) as opposed to just a single iteration. An intuitive understanding of the algorithm is that the communication graph per transition matrix $M'[t]$ is a complete graph before trimming and a reduced graph after trimming.

\medskip
The detailed proof of convergence is shown below:

 \begin{lemma} \label{L1}
    The product of two transition matrices $M^1[t] \times M^2[t]$ will have a column with at least $b+1$ non-zero values.
 \end{lemma}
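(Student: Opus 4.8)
The plan is to work entirely with the non-zero \emph{pattern} of the matrices rather than with the numerical weights. Because every transition matrix is row-stochastic with non-negative entries, no cancellation can occur when we multiply, so $(M^1[t]\,M^2[t])_{ij}$ is a non-zero value exactly when there is some index $k$ with $M^1_{ik}[t]>0$ and $M^2_{kj}[t]>0$; that is, the non-zero pattern of the product is the Boolean product of the two patterns, and a non-zero entry of the product corresponds to a length-two influence path $j\to k\to i$ through the two reduced graphs. Before starting I would record the two structural facts I intend to lean on: (i) the off-diagonal non-zero entries of each factor are precisely the edges of its associated reduced graph (an entry $M_{ik}$ is non-zero exactly when node $k$'s state survives the trimmed-mean step of node $i$), and (ii) each matrix has a non-zero diagonal, i.e.\ every honest node retains positive, constant-lower-bounded weight on its own previous state. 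Fact (ii) is the linchpin and is discussed below.

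Granting these facts, the argument is short and genuinely uses both factors. I would apply Corollary \ref{SourceCorollary} to the reduced graph underlying the right-hand factor $M^2[t]$: it produces a node $v_0$ with at least $b$ outgoing edges, so the set $S=\{k : M^2_{k v_0}[t]>0\}$ has $|S|\ge b$. The non-zero diagonal of $M^2[t]$ puts $v_0$ itself into $S$ (and $v_0$ is distinct from its $b$ out-neighbors, since the reduced graph has no self-loops), giving $|S|\ge b+1$. To transfer this density to the product I would use the non-zero diagonal of the left factor: for every $k\in S$, keeping only the $k$-th term of the defining sum yields $(M^1[t]\,M^2[t])_{k v_0}\ge M^1_{kk}[t]\,M^2_{k v_0}[t]>0$. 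Hence column $v_0$ of the product is non-zero in each of the at least $b+1$ rows of $S$, which is exactly the claim. The reduced-graph density of $M^2[t]$ supplies the $b+1$ candidate rows, while the diagonal of $M^1[t]$ is what preserves them under multiplication.

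The main obstacle is establishing Fact (ii), a non-zero self-weight for each honest node, and this is delicate precisely because a node's own value can be trimmed away when it is extreme, so the self-weight is not non-zero for any naive reason. I would resolve it exactly as in the validity argument and in \cite{Transition}: since the top and bottom $b$ values are removed, every surviving value lies between two honest values, so each surviving Byzantine value can be rewritten as a convex combination of honest states, producing an $h\times h$ row-stochastic transition matrix over the honest nodes whose diagonal is bounded below by a positive constant. If in the precise model the diagonal of a single factor cannot be guaranteed, the same target of $b+1$ can still be reached through a genuine two-hop path, which is where the second factor becomes indispensable rather than cosmetic. Finally, because the set of reduced graphs $R_f$ is finite (of size $\tau$), all non-zero entries share a single positive lower bound, so every product of such entries is again a non-zero value in the sense defined in the paper; this bookkeeping is routine and I would not belabor it.
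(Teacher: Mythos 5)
Your proof follows essentially the same route as the paper's: both arguments rest on Corollary \ref{SourceCorollary} to produce a hub node $v_0$ whose value survives trimming at $b$ other nodes, and both obtain the ``$+1$'' from a node's retention of its own value (the paper's phrase ``including itself'' is exactly your Fact (ii)). Where you go beyond the paper is in actually handling the product structure: read literally, the paper's proof only establishes that a \emph{single} transition matrix has a column with $b+1$ non-zero entries, and never explains why this property survives multiplication by the other factor. Your step $(M^1[t]\,M^2[t])_{k v_0} \geq M^1_{kk}[t]\,M^2_{k v_0}[t] > 0$, using non-negativity together with the left factor's non-zero diagonal, is precisely the bookkeeping the paper omits.

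That said, the point you flag as delicate --- the non-zero diagonal --- is a genuine issue for \emph{both} proofs, and your proposed resolution does not close it. In Algorithm \ref{Alg1} a node's own value enters the sort and can itself be trimmed (unlike in \cite{Transition}, where a node always keeps its own value), and the convex-combination rewriting of surviving Byzantine values yields row stochasticity but not, by itself, a positive diagonal. Concretely, take $m=4$, $b=1$, honest values $5, 5, 10$ with node $i$ holding $10$, and a Byzantine value $0$ sent to node $i$: the trim discards $0$ and $10$, the update gives $5$, and any row-stochastic representation $5 = M_{i1}\cdot 5 + M_{i2}\cdot 5 + M_{ii}\cdot 10$ forces $M_{ii}=0$, since positive weight on $10$ would have to be offset by honest values below $5$, which do not exist. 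So the diagonal cannot be taken for granted; one must either modify the trimming so that a node never discards its own state (which restores Vaidya's guarantee and makes both your argument and the paper's go through), or replace the diagonal step with the ``genuine two-hop path'' argument you allude to but do not carry out. This gap is shared with the paper --- your write-up has the virtue of naming it, the paper's does not --- but as written neither argument is complete without it.
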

 
 \begin{proof}
     From Corollary \ref{SourceCorollary}, it is shown that there exists a single node within the network with outgoing communication edges to at least $b$ different honest neighbors \textit{after the trimmed-mean step}. This implies that this single node has its values received by at least $b+1$ nodes (including itself). Let us call this central node Node $j$. Thus, at least $b+1$ rows will have a non-zero value in column $j$, proving the lemma.
 \end{proof}
 
  \begin{lemma} \label{L2}
    Every transition matrix row $M_i[t]$, for all $0\leq i \leq h$, will contain exactly $b+1$ non-zero values.
 \end{lemma}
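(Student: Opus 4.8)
The plan is to read each row $M_i[t]$ directly off the trimmed-mean update (\ref{update}) and count how many honest states actually influence node $i$'s new value. First I would record the phase-level communication structure: because the honest subgraph has diameter at most $D$ and messages relay across a full phase, within the phase node $i$ receives the state of every one of the $h = 2b+1$ honest nodes together with $b$ (possibly adversarial) values. Hence it sorts exactly $m = 3b+1$ values and, after discarding the extreme $b$ on each side, keeps the middle $m - 2b = b+1$ of them. Since $M_{ij}[t]$ is by definition the weight with which honest node $j$'s previous state enters node $i$'s update, the lemma is equivalent to showing that this convex combination has exactly $b+1$ non-zero coefficients.

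For the upper bound I would observe that (\ref{update}) averages only $b+1$ surviving values, so once every surviving value is attributed to honest states, at most $b+1$ honest columns can carry weight. The attribution is the substantive step. A surviving value coming directly from an honest node is assigned to that node's own column; a surviving value supplied by a Byzantine node must instead be rewritten in terms of honest states. Here I would reuse the case analysis behind the validity proof: whenever a Byzantine value lies in the kept middle set, both the discarded-low set $L$ and the discarded-high set $G$ still contain an honest state, so the Byzantine value is sandwiched between two honest states and can be written as a convex combination of them. Performing this for every surviving slot yields the full non-zero pattern of the row $M_i[t]$.

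The main obstacle is pinning the count to \emph{exactly} $b+1$ rather than merely bounding it above by the $b+1$ surviving slots. The difficulty lies in the columns introduced when a Byzantine survivor is split across two honest pivots: a naive sandwiching can spend two honest columns on a single slot, threatening to push the count above $b+1$, whereas a pivot that coincides with an already-counted honest survivor threatens to drop it below. I would resolve this by fixing a canonical representation, for instance by matching the $b+1$ surviving slots to $b+1$ distinct honest nodes and checking, using the tightness of the fault bound $m = 3b+1$, that each slot can be charged to its own honest column with strictly positive weight. Making this matching well defined, so that the two competing sources of miscount cancel and leave precisely $b+1$ non-zero entries in every row, is the step I expect to demand the most care.
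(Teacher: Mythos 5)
Your first paragraph is, in fact, essentially the paper's entire proof: the paper simply counts the $3b+1$ values received over a phase, subtracts the $2b$ trimmed ones, and declares the remaining $b+1$ values to be the non-zero entries of $M_i[t]$, without ever addressing how a surviving \emph{Byzantine} value gets attributed to the columns of an $h\times h$ matrix indexed by honest nodes. So the attribution problem you raise in your second and third paragraphs is a genuine issue that the paper silently skips, and your instinct to resolve it by sandwiching (a kept Byzantine value lies between an honest value in the discarded-low set $L$ and an honest value in the discarded-high set $G$) is exactly the device used in the transition-matrix construction of the cited work that this paper builds on.

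However, the step you flag as demanding the most care---a canonical matching that pins the count to \emph{exactly} $b+1$---cannot be completed, because the exact claim fails in corner cases. Take $b=1$, $m=4$: node $i$ holds honest values $0, 5, 10$ and a Byzantine value $5$; after trimming, the kept set is $\{5,5\}$, one honest and one Byzantine. Any valid representation of the Byzantine $5$ either charges it entirely to the honest survivor whose value is also $5$ (giving $1=b$ non-zero columns) or splits it across the trimmed honest values $0$ and $10$ with weights $\tfrac12,\tfrac12$ (giving $3=b+2$ non-zero columns); no representation yields exactly $b+1=2$. What \emph{is} provable---and is all that Lemma \ref{L1} and Theorem \ref{T1} actually use downstream---is the lower bound of at least $b+1$: if $k$ Byzantine values survive trimming, then at least $k$ honest values sit in $L$ and at least $k$ in $G$, so you can give each Byzantine survivor its own pair of honest pivots, one from $L$ and one from $G$, distinct across survivors; each survivor then places strictly positive weight on at least one column used by no other slot, so the row has at least $(b+1-k)+k = b+1$ non-zero entries. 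The clean fix is therefore to restate the lemma with ``at least'' in place of ``exactly'' and run your matching argument for that version, which removes the cancellation worry you describe.
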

 
 \begin{proof}
     A transition matrix models the communication during a single phase, or a set of D iterations. This is modeled by a complete graph, so each node receives exactly $3b+1$ values from unique nodes. The trimming step will remove exactly $2b$ of these, leaving exactly $b+1$ remaining non-zero values $M_i[t]$
 \end{proof}
 
 Lastly, we introduce one more transition matrix definition:
 \theoremstyle{definition} 
\begin{definition}{Scrambling Matrix:}
A scrambling matrix is defined by a transition matrix with at least one non-zero column.
\end{definition}

 We now introduce our main result: 
\begin{theorem} \label{T1}
     The product of three transition matrices $M^1[t] \times M^2[t] \times M^3[t]$ will result in a scrambling matrix.
\end{theorem}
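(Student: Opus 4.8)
The plan is to exploit the associativity of matrix multiplication together with a counting argument that fuses Lemma \ref{L1} and Lemma \ref{L2}. I would group the product as $M^1[t] \times (M^2[t] \times M^3[t])$. First I would apply Lemma \ref{L1} to the inner product of two transition matrices $B := M^2[t] \times M^3[t]$, which guarantees that $B$ has some column $c$ containing at least $b+1$ non-zero entries. Let $S = \{\, j : B_{jc} \neq 0 \,\}$ be the set of rows that are non-zero in this column, so that $\lvert S \rvert \geq b+1$.

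Next I would invoke Lemma \ref{L2} on the outer factor $M^1[t]$: every row $M^1_i[t]$ contains exactly $b+1$ non-zero entries. For each row index $i$, let $T_i$ be the set of columns in which $M^1_i[t]$ is non-zero, so $\lvert T_i \rvert = b+1$. The heart of the argument is a pigeonhole step. Since the transition matrices are $h \times h$ with $h = 2b+1$ (the tight regime underlying Lemma \ref{L2}), we have $\lvert S \rvert + \lvert T_i \rvert \geq (b+1) + (b+1) = 2b+2 > h$, which by inclusion–exclusion forces $S \cap T_i \neq \emptyset$ for every row $i$.

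Finally I would expand the target entry in column $c$ of the full product. Writing $(M^1[t]\, B)_{ic} = \sum_{j} M^1_{ij}[t]\, B_{jc}$ and choosing any $j^\ast \in S \cap T_i$, both factors $M^1_{ij^\ast}[t]$ and $B_{j^\ast c}$ are non-zero. Because every entry of a transition matrix is non-negative, there can be no cancellation, so this single surviving term already makes the whole sum strictly positive (indeed bounded below by a positive constant, since each factor is a genuine non-zero value). As this holds for every row $i$, column $c$ of $M^1[t] \times M^2[t] \times M^3[t]$ is a non-zero column, and since the product of transition matrices is again a transition matrix, this is exactly a scrambling matrix.

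I expect the main obstacle to be the counting step, specifically ensuring the supports $S$ and $T_i$ are large enough relative to $h$ to force a non-empty intersection; this is where the assumption $h = 2b+1$ (the threshold case of $b < \tfrac{1}{3}m$) is essential, and the general case $h > 2b+1$ would require the generalized forms of Lemmas \ref{L1} and \ref{L2}, exactly as in the source-component argument. A secondary point worth stating explicitly is the non-negativity of transition-matrix entries, which is precisely what allows one surviving term to witness positivity of the entire matrix-product entry.
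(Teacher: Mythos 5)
Your proposal is correct and follows essentially the same route as the paper: apply Lemma~\ref{L1} to a product of two of the matrices, Lemma~\ref{L2} to the rows of the remaining one, and close with a pigeonhole count in dimension $h=2b+1$, with non-negativity of the entries ruling out cancellation. If anything, your grouping $M^1[t]\times(M^2[t]\times M^3[t])$ is the cleaner one: it places the Lemma~\ref{L1} column in the right-hand factor and the Lemma~\ref{L2} row supports in the left-hand factor, exactly as the expansion $\sum_{j} M^1_{ij}[t]\,(M^2[t]\times M^3[t])_{jc}$ requires, whereas the paper groups as $(M^1[t]\times M^2[t])\times M^3[t]$ and cites the column support of the left factor together with the row supports of the right factor, which is transposed relative to what its own pigeonhole step $\sum_{z}(M^1[t]\times M^2[t])_{iz}M^3_{zj}[t]$ actually needs (the paper's wording reads correctly only for the reordered product $M^3[t]\times M^1[t]\times M^2[t]$, a harmless slip since the three matrices are arbitrary, and one that your version quietly fixes).
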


\begin{proof}
    From Lemma \ref{L1}, $M^1[t] \times M^2[t]$ contains a column of at least $b+1$ non-zero values. With loss of generality, define this column as Column $j$. From Lemma \ref{L2}, $\forall i$ such that $0\leq i < 2b+1$, $M^3_i[t]$ contains at least $b+1$ non-zero values.
    
    The size of any transition matrix $M[t]$ is $(2b+1) \times (2b+1)$. From the Pigeonhole Principle, $\forall i$ such that $0\leq i < 2b+1$, $\exists z$ such that $(M^1[t] \times M^2[t])_{iz}$ and $M^3_{zj}$ are both non-zero values. This implies that $\forall i$ such that $0\leq i < 2b+1$, $(M^1[t] \times M^2[t] \times M^3[t])_{ij}$ is a non-zero value. In other words, column $j$ of matrix $M^1[t] \times M^2[t] \times M^3[t]$ is a non-zero column, proving the theorem.
\end{proof}

\medskip
From \cite{Transition}, Theorem \ref{T1} is sufficient to prove that Relay-IABC satisfies the convergence condition.

\section{Convergence Rate Analysis}
In this section we compare the convergence rates of IABC and compare it to Relay-IABC. We once again extend the transition matrices presented by Vaidya in \cite{Transition} to show how Relay-IABC achieves convergence at a faster rate.

\subsection{IABC Analysis}
Let $M[t]$ be the transition matrix that models the update of iteration $t$. We now present some convergence analysis of IABC from \cite{Transition}:

\begin{align}
\lim_{t\rightarrow \infty} \delta(\Pi_{i=1}^t M[t])
\leq \lim_{t\rightarrow\infty} \Pi_{i=1}^t \lambda(M[t]) \\ \label{rate}
\leq \lim_{i\rightarrow\infty} \Pi_{i=1}^{\lfloor\frac{t}{h\tau}\rfloor} \lambda(Q(i)) \\
= 0 
\end{align}

Additionally, 
\begin{align}
    \lambda(M[t])\leq 1 \\
    \lambda(Q(i))\leq (1-\beta^{h\tau})<1 \label{scrambling}
\end{align}

The network converges if and only if $\lim_{t\rightarrow \infty} \delta(\Pi_{i=1}^t M[t])$ \cite{Transition}. Thus, we see the convergence rate honest nodes in the entire network can be described empirically as the rate at which \eqref{rate} approaches 0. The exponent in \eqref{scrambling} comes from the fact that it takes the product of $h\tau$ transition matrices $M[t]$ to form scrambling matrix $Q(i)$ given a network following the network assumptions of \cite{Transition}. 

\subsection{Relay-IABC Analysis}
We now analyze the specific convergence rate of Relay-IABC compared to IABC. From \ref{T1}, we have proven that Relay-IABC only requires the product of three transition matrices to form a scrambling matrix. From \cite{Transition}, it also is shown that:
\begin{align}
\lim_{t\rightarrow \infty} \delta(\Pi_{i=1}^t M'[t])
\leq \lim_{t\rightarrow\infty} \Pi_{i=1}^t \lambda(M'[t]) \\ \label{newrate}
\leq \lim_{i\rightarrow\infty} \Pi_{i=1}^{\lfloor\frac{t}{3D}\rfloor} \lambda(Q'(i)) \\
= 0 
\end{align}

Given that,
\begin{equation}
    \lambda(Q(i))\leq (1-\beta^{3D})<1
\end{equation}

and $3D << h\tau$, then we see that \eqref{newrate} approaches zero much faster than \eqref{rate}. This evidence suggests that Relay-IABC achieves network converges at a significantly faster rate than traditional IABC. In our next section, we support this claim with empirical data.

\section{Practical Applications}
In this section we discuss the merits of the Relay-IABC algorithm in the context of potential applications in the real-world.

\subsection{Convergence Rate}
In this section, we seek to provide empirical evidence for analysis done in Section \ref{Theoretical} through simulations. All code can be found at: https://github.com/matthew-ding/primes-project-2021.

We ran Python scripts to simulate a network of nodes running both the IABC and Relay-IABC algorithms. The network was generated as a random Erdos-Renyi graph ($p=0.8$), with 30 honest nodes and 14 Byzantine nodes. Each honest node was given a random initial state within $(-110, 110)$ and byzantine nodes would output a random value approximately within that range with slight variation depending on which honest node they were communicating with. 

In Figure \ref{fig:graph}, we plot the standard deviation of all honest nodes in the network over the iterations of the algorithm. We see that the Relay-IABC algorithm has an empirically faster convergence rate given the simulation parameters.

\begin{figure} [htbp]
    \centerline{\includegraphics[width=1\columnwidth]{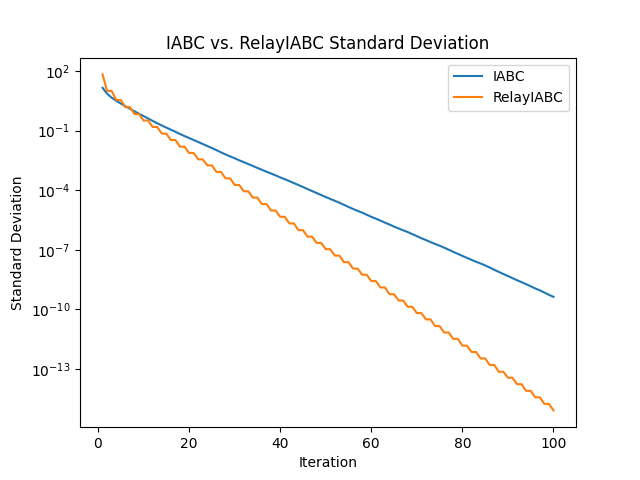}}
    \caption{Comparing convergence rates of IABC and Relay-IABC}
 
    \label{fig:graph}
\end{figure}

\subsection{Network Connectivity}
Besides Relay-IABC having faster empirical convergence rates as compared to IABC, another major benefit is that Relay-IABC may to implemented on sparse networks. On the other hand, IABC requires very dense network connections among honest nodes (necessary condition of $2b+1$ neighbors per node). Functionality on sparse networks is a major requirement for real-life applications, as most graphs that appear in the "real-world" are generally sparse (e.g. computer networks) \cite{Sparse}.

\section{Summary and Future Work}
In this paper, we introduce the Relay-IABC algorithm for iterative approximate Byzantine consensus. The algorithm extends the traditional IABC algorithm \cite{Transition} with the novel usage of signed and relayed messages. We also compare the convergence rates of IABC and Relay-IABC with both theoretical analysis of transition matrices as well as simulation results with Python.

Additionally, we also theorize that the upper-bound of the proportion of Byzantine nodes in this paper of $b < \frac{1}{3}m$ is not tight. It is shown that in approximate consensus algorithms where faulty nodes are not allowed to equivocate (send mismatching messages to separate nodes during the same iteration), algorithms may actually achieve a bound $b < \frac{1}{2}m$ within complete networks \cite{LeBlanc}. We believe that it may be possible to achieve an identical bound for IABC by utilizing cryptography methods to detect Byzantine equivocation. We leave the exact proof for our future work.

Lastly, the communication method of the Relay-IABC algorithm is that of "flooding", where nodes send messages to all their neighbors in order to relay information. While this is a highly robust method of communication, it is also incredibly communication-heavy. The overlay methods proposed in \cite{Drabkin} seek to circumvent this communication cost in the context of Byzantine Broadcast, and similar results might be able to be found for IABC problems as well.

\section*{Acknowledgments}
First of all, I would like to thank MIT and the MIT PRIMES program for giving me this wonderful opportunity to conduct research.

I'd also like to thank Jun Wan and Prof. Lili Su for all of our discussions. Finally, the biggest thanks goes out to my research mentor, Hanshen Xiao, for all of his support and guidance.

\bibliographystyle{IEEEtran}
\bibliography{main.bib}

\end{document}